\documentclass[reprint,superscriptaddress,amsmath,amssymb,aps,pra]{revtex4-2}
\usepackage[colorlinks, linkcolor = ForestGreen, anchorcolor = blue, citecolor = magenta]{hyperref}
\usepackage{amsmath,amssymb,amsthm,mathtools,mathrsfs}
\usepackage[table,dvipsnames]{xcolor}
\usepackage{tikz}
\usepackage{adjustbox}
\usepackage{dsfont}
\usepackage[normalem]{ulem}
\usepackage{enumerate}
\usepackage[all]{xy} 
\usepackage{physics}

\theoremstyle{plain}
\newtheorem{theorem}{Theorem}
\newtheorem{lemma}{Lemma}
\newtheorem{proposition}{Proposition}

\theoremstyle{definition}
\newtheorem{definition}{Definition}

\newtheorem{example}{Example}

\newcommand{\bd}{\begin{definition}}
\newcommand{\ed}{\end{definition}}
\newcommand{\bt}{\begin{theorem}}
\newcommand{\et}{\end{theorem}}
\newcommand{\bn}{\begin{proposition}}
\newcommand{\en}{\end{proposition}}
\newcommand{\be}{\begin{equation}}
\newcommand{\ee}{\end{equation}}
\newcommand{\blem}{\begin{lemma}}
\newcommand{\elem}{\end{lemma}}
\newcommand{\bx}{\begin{example}}
\newcommand{\ex}{\end{example}}
\newcommand{\bprf}{\begin{proof}}
\newcommand{\eprf}{\end{proof}}

\DeclareMathAlphabet{\mathpzc}{OT1}{pzc}{m}{it} 
 \DeclareFontFamily{OT1}{pzc}{}
 \DeclareFontShape{OT1}{pzc}{m}{it}{ <-> s*[1.2] pzcmi7t }{}
 \DeclareMathAlphabet{\mathpzc}{OT1}{pzc}{m}{it}
 
\newcommand{\id}{\operatorname{id}}

\def\E{\mathcal{E}}
\def\H{\mathcal{H}}

\providecommand{\Tr}{\operatorname{Tr}}

\providecommand{\E}{\mathcal{E}}
\providecommand{\HU}{\mathbb{E}_{U}}

\newtheorem*{theoremone}{Theorem 1}
\newtheorem*{theoremtwo}{Theorem 2}
\newtheorem*{theoremthree}{Theorem 3}
\newtheorem*{theoremfour}{Theorem 4}

\begin{document}									
\preprint{APS/123-QED}

\title{Quantum information loss}
\author{James Fullwood}
\email{fullwood@hainanu.edu.cn}
\affiliation{School of Mathematics and Statistics, Hainan University, Haikou, Hainan, 570228, China}
\affiliation{Hainan International Exchange Center for Theoretical Physics, Haikou, Hainan, 570228, China}

\author{Wu-zhong Guo}
\email{wuzhong@hust.edu.cn}
\affiliation{School of Physics, Huazhong University of Science and Technology, Wuhan, Hubei, 430074, China}
\affiliation{Center for Gravitational Physics and Quantum Information,\\
 Yukawa Institute for Theoretical Physics, Kyoto University, Kyoto 606-8502, Japan}

\author{Boyu Yang}
\affiliation{School of Mathematics and Statistics, Hainan University, Haikou, Hainan, 570228, China}

\date{\today}

\begin{abstract}
We introduce a measure of information loss for any quantum process that may be modeled by a prepare-evolve-measure scenario: Alice prepares an ensemble of states that gets sent via a quantum channel to Bob, who then measures the output. As a quantum channel models open system dynamics, our measure of information loss quantifies Bob's inability to retrodict with certainty which state Alice sent through the channel. By minimizing this measure over all possible pure state ensemble decompositions of a fixed state $\rho$, and over all POVMs on the output of a channel $\E$, we arrive at an intrinsic notion of information loss for any state-channel pair $(\rho,\E)$. We show that the vanishing of information loss with respect to all states supported on a fixed codespace $\H_{\text{code}}$ is equivalent to a condition we term \emph{universal pristineness}, which ensures that orthogonal pure states in $\H_{\text{code}}$ get sent via the channel $\E$ to possibly mixed states whose supports are orthogonal. Moreover, we prove universal pristineness is equivalent to the Knill-Laflamme conditions in quantum error correction, which are necessary and sufficient for the existence of a perfect recovery channel for all states supported on $\H_{\text{code}}$. As an application, we apply our framework to the Hayden-Preskill model of black hole evaporation, demonstrating that the evaporation channel becomes asymptotically universally pristine, thereby providing a purely channel-theoretic formulation of Page-time information retrieval.
\end{abstract}

\maketitle

\section{Introduction}

The concept of information loss lies at the heart of various aspects of fundamental physics. From quantum decoherence to Landauer's principle and the black hole information problem, tracking the flow of information in quantum processes has solidified the tenet that ``information is physical''~\cite{Zeh_1970,Joos_1985,Landauer_1961,Zurek_2003,Busch_2009,Hawking_1976, Akil_2025, Landauer_1996}. However, a general consensus has yet to be established regarding a rigorous formulation of information loss associated with quantum processes and how to quantify it. 

While it is often stipulated that the information loss associated with a quantum process is quantified by the difference between the von~Neumann entropies of its initial and final states, such a formulation of information loss tells us nothing about the associated dynamics. In particular, an initial and final state of a quantum process may have the same entropy whether it be unitary evolution or dissipative dynamics corresponding to system-environment interactions. Therefore, the difference of entropies between the initial and final states of a quantum process is fundamentally insufficient to capture the information loss induced by a specified dynamics. 

Conversely, recent work by Li \textit{et al.} \cite{Li_2025} utilizes the quadratic entropy of the Choi state of a quantum channel to quantify information loss associated with open system dynamics. While this state-independent formulation provides a powerful characterization of the channel's intrinsic dynamics, it leaves room to explore the operational aspects of information loss by accounting for the preparation of an initial state to be sent through a channel. While Cerf defines a loss function in Ref.~\cite{Cerf_1997} that explicitly incorporates both an initial state and a quantum channel governing the dynamics of a quantum process, this approach relies on the introduction of an auxiliary reference system. Building on this foundation, there is a natural opportunity to develop a mathematically precise, reference-free formulation of information loss that integrates both initial states and dynamics within an inherently operational framework.

In this work, we adapt the axiomatic approach of Refs.~\cite{Baez_2011,FuPa21} on classical information loss to the operational setting of prepare-evolve-measure (PEM) scenarios associated with quantum systems: Alice prepares a quantum state $\rho_i$ with probability $p_i$, and then sends the state via a quantum channel $\E$ to Bob who subsequently performs a measurement on the channel's output. We then define the information loss of such a PEM scenario as the conditional entropy of Alice's preparation given Bob's measurement outcome. Operationally, we show that the information loss associated with a PEM scenario quantifies Bob's inability to retrodict with certainty which state Alice sent through the channel. As such a measure of information loss depends on the uncertainty of Alice's preparation, the choice of Bob's measurement, and the underlying dynamics associated with $\E$, it incorporates all relevant operational data. Moreover, we show that an intrinsic, measurement-independent notion of quantum information loss is obtained by minimizing this quantity over all pure state ensemble decompositions of a fixed state $\rho$ and over all POVMs on the output of the quantum channel $\E$.

We also prove a necessary and sufficient condition for the vanishing of the intrinsic information loss associated with a state-channel pair $(\rho,\E)$, which we refer to as the \emph{pristine} condition. We show that unitary channels are pristine with respect to any initial state $\rho$, and more generally, that the pristine condition is equivalent to the existence of a perfect recovery channel for all states on a fixed code space, establishing a precise operational interpretation of information loss in terms of quantum error correction~\cite{Knill_2000}. Moreover, we show that the intrinsic information loss of the pair $(\rho,\E)$ is bounded above by the von~Neumann entropy $S(\rho)$, which is achieved for example by the completely depolarizing channel. 

As an application of our framework, we consider a PEM scenario in the context of black hole evaporation, where Alice throws a fixed number of qubits into a black hole and Bob then measures the Hawking radiation. Under the Hayden-Preskill model~\cite{Hayden_2007}—where random unitaries drive the internal scrambling of information—the evolution mapping Alice's qubits to the total radiation is modeled as a quantum channel. Within this setup, we show that the associated process becomes nearly pristine as the size of the newly emitted radiation exceeds the number of qubits Alice threw into the black hole, thus providing a channel-theoretic formulation of Page-time information retrieval~\cite{Page_1993}.

\section{Information loss in PEM scenarios}
 Let $A$ and $B$ denote quantum systems with finite-dimensional Hilbert spaces $\H_A$ and $\H_B$. The algebra of linear operators on $\H_X$ with $X\in \{A,B\}$ will be denoted by $\mathcal{L}(\H_X)$, and states of $X$ will be represented by density operators on $\H_X$. Throughout this work, we suppose that Alice prepares system $A$ in state $\rho_i$ with probability $p_i$, and then sends the state to Bob via a quantum channel $\E:\mathcal{L}(\H_A)\to \mathcal{L}(\H_B)$, which is a completely positive, trace-preserving linear map which models system-environment interactions. We further assume that Bob will measure the output of the channel $\E$ with a POVM $\{N_j\}$ on system $B$, so that $\{N_j\}$ is a collection of positive operators on $\H_B$ which sum to the identity operator $\mathds{1}_B$. Such a setup will be referred to as a \emph{prepare-evolve-measure} (PEM) scenario, and will be denoted by the triple $(\{(p_i,\rho_i)\},\E,\{N_j\})$. To avoid measure-zero subtleties we assume $p_i>0$ for all $i$. 

Given a PEM scenario $(\{(p_i,\rho_i)\},\E,\{N_j\})$, the joint probability $P(i,j)$ that Alice prepares the state $\rho_i$ and Bob's obtains the measurement outcome $N_j$ is given by $P(i,j)=p_i\Tr\big(\E(\rho_i) N_j\big)$. It then follows that the marginal distribution $Q(j)=\sum_i P(i,j)$ corresponding to the probability that Bob obtains the outcome $N_j$---regardless of which state Alice prepares---is given by $Q(j)=\Tr(\E(\rho)N_j)$, where $\rho=\sum_ip_i\rho_i$. We then define the \emph{information loss} $K$ associated with the PEM scenario $(\{(p_i,\rho_i)\},\E,\{N_j\})$ to be the conditional entropy of Alice's preparation \emph{given} Bob's measurement outcome, which is given by $K=H(P)-H(Q)$, where $H(\cdot)$ denotes the Shannon entropy. We note that it follows from standard properties of Shannon entropy that $0\leq K\leq H(p)$, where $p$ is the distribution associated with Alice's preparation.

If we minimize $K$ with respect to all possible pure state ensemble decompositions $(p_i,\dyad{\psi_i}{\psi_i})$ of a fixed state $\rho$ of $A$, and all possible POVMs on system $B$, we obtain the \emph{intrinsic information loss} associated with the pair $(\rho,\E)$, which will be denoted by $K(\rho,\E)$. We note that by taking a spectral decomposition of $\rho$, the upper bound $K\leq H(p)$ implies that $K(\rho,\E)\leq S(\rho)$, where $S(\cdot)$ denotes the von~Neumann entropy.

The following result is fundamental.
\bt \label{PTX71}
Let $\{(p_i,\rho_i)\}$ be an ensemble of states of $A$, and let $\E:\mathcal{L}(\H_A)\to \mathcal{L}(\H_B)$ be a quantum channel. Then there exists a POVM $\{N_j\}$ on system $B$ such that $K=0$ if and only if $\E(\rho_i)\E(\rho_k)=0$ for all $i\neq k$. 
\et

In light of Theorem~\ref{PTX71} (which will be proved in the Supplemental Material, along with the proofs of Theorems~\ref{ECXN57}-\ref{thm:HP_asymptotic_universal_pristine}), a channel $\E$ will be referred to as \emph{pristine} with respect to an ensemble $\{(p_i,\rho_i)\}$ of states of $A$ if and only if $\E(\rho_i)\E(\rho_k)=0$ for all $i\neq k$. If $\E$ is pristine with respect to an ensemble of pure states $\{(p_i,\dyad{\psi_i}{\psi_i})\}$, let $\Pi_i$ be the orthogonal projector onto the image of $\E(\dyad{\psi_i}{\psi_i})$ for all $i$, and define $N_i=\Pi_i$ and $N_{\mathrm{rest}}=\mathds{1}_B-\sum_{i}\Pi_i$. It then follows from the proof of Theorem~\ref{PTX71} that $\{N_j\}=\{N_i\}\cup\{N_{\mathrm{rest}}\}$ is a projective measurement such that $K=0$. Moreover, as the intrinsic information loss $K(\rho,\E)$ is obtained by minimizing $K$ over all pure state ensemble decompositions of $\rho$ and over all POVMs on $B$, it follows that $K(\rho,\E)=0$ whenever $\E$ is pristine with respect to an ensemble of pure states $\{(p_i,\dyad{\psi_i}{\psi_i})\}$ such that $\rho=\sum_ip_i \dyad{\psi_i}{\psi_i}$.

If $\E$ is a unitary channel, so that $\E(\rho)=U\rho U^{\dag}$ for some unitary operator on $\H_A$, then it follows from Theorem~\ref{PTX71} that $K(\rho,\E)=0$ for any state $\rho$ of $A$. Indeed, if we consider the pure state ensemble decomposition of $\rho$ corresponding to its spectral decomposition $\rho=\sum_i p_i\dyad{i}{i}$, then for all $i\neq k$ we have 
\begin{align*}
\E(\dyad{i}{i})\E(\dyad{k}{k})&=U\dyad{i}{i}U^{\dag}U\dyad{k}{k}U^{\dag} \\
&=U\braket{i}{k}\dyad{i}{k}U^{\dag}=0\, ,
\end{align*}
thus $\E$ is pristine with respect to the spectral decomposition of $\rho$. Therefore, it follows from Theorem~\ref{PTX71} that $K(\rho,\E)=0$ for any state $\rho$ whenever $\E$ is a unitary channel.

At the other extreme, $K=H(p)$ whenever $\E$ is a discard-and-prepare channel, i.e., if there exists a fixed state $\sigma$ of $B$ such that $\E(\omega)=\sigma$ for all states $\omega$ of $A$. To see this, let $\{N_j\}$ be a POVM on $B$. Then in such a case we have $Q(j)=\Tr(\sigma N_j)$ and $P(i,j)=p_i\Tr(\sigma N_j)=p_iQ(j)$ for all $i$ and $j$. By standard properties of Shannon entropy~\cite{Cover_2006}, it follows that $H(P)=H(p)+H(Q)$, thus
\[
K=H(P)-H(Q)=H(p)+H(Q)-H(Q)=H(p)\, .
\]
Therefore, all the information of Alice's preparation $\{(p_i,\rho_i)\}$ is lost in such a case, as expected. As $K=H(p)$ holds independent of the POVM $\{N_j\}$, it follows that the the intrinsic information loss $K(\rho,\E)$ is obtained by minimizing $H(p)$ over all pure state decompositions $\rho=\sum_i p_i \dyad{\psi_i}{\psi_i}$. As the probability distribution corresponding to the eigenvalues of $\rho$ majorize any such distribution $p$~\cite{Nielsen_2001}, it follows that $K(\rho,\E)=S(\rho)$, thus attaining the upper bound $K(\rho,\E)\leq S(\rho)$.

Furthermore, the intrinsic information loss satisfies the data-processing inequality $K(\rho,\E)\leq K(\rho,\mathcal{F}\circ \E)$ for any subsequent quantum channel $\mathcal{F}:\mathcal{L}(\H_B)\to \mathcal{L}(\H_C)$. This follows directly from the Heisenberg picture: for any pure state ensemble decomposition $\{(p_i,\psi_i)\}$ of $\rho$ and any POVM $\{M_j\}$ on system $C$, the complete positivity and unitality of the adjoint map $\mathcal{F}^{\dag}$ ensures that $N_j=\mathcal{F}^{\dag}(M_j)$ constitutes a valid POVM on system $B$. The joint distribution obtained by measuring $\{M_j\}$ after the composite channel $\mathcal{F}\circ \E$ is therefore identical to the distribution obtained by measuring $\{N_j\}$ directly after $\E$. Since $K(\rho,\E)$ is the infimum over all such pure state ensembles and all POVMs on $B$, it is bounded from above by the conditional entropy associated with this specific configuration, yielding the data-processing inequality $K(\rho,\E)\leq K(\rho,\mathcal{F}\circ \E)$.

\section{Information loss and classical retrodiction}
Let $(\{(p_i,\rho_i)\},\E,\{N_j\})$ be a PEM scenario, and let $I$ and $J$ denote the index sets of Alice's preparation and Bob's measurement, respectively. The conditional probabilities $Q(j|i)=\Tr(\E(\rho_i)N_j)$ combine to form a stochastic matrix, which defines a Markov kernel $f:I\to J$. Such a Markov kernel may be viewed as a generalized function whose values are probabilisitic, i.e., $f(i)=j$ with probability $Q(j|i)$. If $Q(j|i)$ is either $0$ or $1$ for all $i$ and $j$, then $f$ may be identified with an actual function, where $f(i)=j$ whenever $Q(j|i)=1$. Given another Markov kernel $g:J\to K$ for some finite set $K$, the composition $g\circ f:I\to K$ is the Markov kernel whose associated stochastic matrix is obtained by multiplying the stochastic matrices associated with $f$ and $g$.

Suppose Alice wants to communicate a message to Bob via the operational setup of the PEM scenario $(\{(p_i,\rho_i)\},\E,\{N_j\})$, so that Alice has a bijective encoding map $e:\mathfrak{A}\to I$ and Bob has a decoding map $d:J\to \mathfrak{A}$, where $\mathfrak{A}$ is the alphabet in which the message will be encoded. Such a communication protocol will be error-free whenever the composition 
\be \label{CMNXP47}
\mathfrak{A}\overset{e}\longrightarrow  I\overset{f}\longrightarrow J\overset{d}\longrightarrow \mathfrak{A}
\ee
is the identity map on the alphabet $\mathfrak{A}$. 

Now suppose there exists a function $g:J\to I$ such that $g\circ f:I\to I$ is the identity map (recall that $f$ is a Markov kernel not a function, so that the composition $g\circ f$ corresponds to matrix multiplication). In such a case, we can take the decoding map $d:J\to \mathfrak{A}$ to be $d=e^{-1}\circ g$, where $e:\mathfrak{A}\to I$ is Alice's encoding map. Plugging this into the composition \eqref{CMNXP47} yields
\begin{align*}
d\circ f\circ e&=(e^{-1}\circ g)\circ f\circ e \\
&=e^{-1}\circ (g\circ f)\circ e \\
&=e^{-1}\circ \id_I\circ\, e \\
&=\id_{\mathfrak{A}}\, ,
\end{align*}
thus the communication protocol will be error-free whenever there exists a function $g:J\to I$ such that $g\circ f$ is the identity map on the index set $I$. In such a case, we refer to the function $g$ as a \emph{perfect retrodiction map}, since given a measurement outcome associated with $\mathcal{N}_j$, Bob can be certain that Alice's input state was $\rho_{g(j)}$. In particular, the set $g^{-1}(i)\subset J$ corresponds to all possible measurement outcomes when Alice inputs state $\rho_i$,  as depicted in Figure~\ref{fig_1}. 

The following result yields a characterization of vanishing information loss in terms of perfect retrodiction maps.

\bt \label{ECXN57}
Let $(\{(p_i,\rho_i)\}_{i\in I},\E,\{N_j\}_{j\in J})$ be a PEM scenario. Then $K=0$ if and only if there exists a perfect retrodiction map $g:J\to I$.
\et

\begin{figure}[t]
    \centering
    \includegraphics[width=\columnwidth]
    {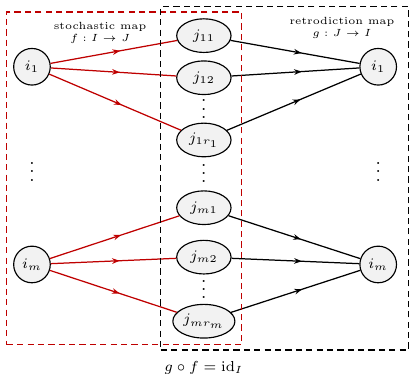}
    \caption{
   When $K=0$, the stochastic map $f:I\to J$ corresponding to the classical labels of Bob's measurement outcomes given Alice's inputs admits a left inverse $g:J\to I$, which we refer to as a perfect retrodiction map.}
    \label{fig_1}
\end{figure}

It follows from Theorem~\ref{ECXN57} that the information loss $K$ quantifies Bob's inability to retrodict with certainty which state Alice sent through the quantum channel $\E$, which may also be viewed as a fundamental degree of error for any communication protocol associated with encoding and decoding maps $e:\mathfrak{A}\to I$ and $d:J\to \mathfrak{A}$. In particular, a positive value $K>0$ strictly precludes the existence of a perfect retrodiction map $g$, since in such a case Bob's measurement outcomes inherently possess overlapping distributions for distinct preparation states, rendering an error-free decoding map $d$ impossible to construct.

To make this fundamental degree of error more precise, we can formally quantify the probability that Bob's classical retrodiction fails. Suppose Bob employs a deterministic guessing strategy to infer Alice's preparation index $i\in I$ from his measurement outcome $j\in J$. To minimize his probability of error, Bob's optimal strategy is to choose the index $i\in I$ that maximizes the joint probability $P(i,j)$. Under this optimal decoding strategy, his minimum probability of error $P_e$ is given by 
\[
P_e=1-\sum_{j\in J}\max_{i\in I}P(i,j) \, .
\]
As the information loss $K$ is defined as the conditional entropy of Alice's preparation given Bob's outcome, Fano's inequality~\cite{Cover_2006} implies that $K\leq f_m(P_e)$, where $m=|I|$ is the number of preparation states and 
\be \label{FMX71}
f_m(x)=h_2(x)+x\log(m-1)\, ,
\ee
where $h_2$ is the binary entropy. Since $f_m(x)$ is strictly monotonically increasing for sufficiently small $x$, a small value of $K$ bounds $P_e$, tightly constraining the probability that the communication protocol fails.

\section{Universal pristineness} 
The pristine condition introduced above is defined relative to a specified ensemble. By Theorems~\ref{PTX71} and \ref{ECXN57}, it characterizes whether Bob can infer the classical label of Alice's preparation without error. We now consider a stronger question, namely, whether one and the same decoding operation can recover an arbitrary input state from a given subspace of $\H_A$, including arbitrary coherent superpositions. This motivates an ensemble-independent strengthening of the pristine condition, which we refer to as `universal pristineness'. For this, let $\H_{\text{code}}\subset \H_A$ be a subspace corresponding to some set of states Alice will send to Bob via a quantum channel $\E$. Density operators of the form $\dyad{\psi}{\psi}$ with $\ket{\psi}\in \H_{\text{code}}$ will be denoted simply by $\psi$. The channel $\E$ is then said to be \emph{universally pristine} with respect to $\H_{\text{code}}$ if
\[
\braket{\psi}{\phi}=0\implies \E(\psi)\E(\phi)=0 \quad \forall \ket{\psi},\ket{\phi}\in \H_{\text{code}}\, .
\] 

The following result establishes an equivalence between universal pristineness, vanishing information loss, and the Knill-Laflamme conditions in quantum error-correction~\cite{Knill_1997}, which are necessary and sufficient for perfect recovery of states in $\H_{\text{code}}$. 

\bt \label{THRMX27}
Suppose $\{E_{\alpha}\}$ is a collection of Kraus operators for the channel $\E$, and let $P$ be the orthogonal projection operator onto $\H_{\text{code}}$. Then the following statements are equivalent.
\begin{enumerate}[i.]
\item \label{THRMX271}
$\E$ is universally pristine with respect to $\H_{\emph{code}}$.
\item \label{THRMX2711}
$K(\rho,\E)=0$ for every state $\rho$ supported on $\H_{\emph{code}}$.
\item \label{THRMX272}
There exists a positive semi-definite
matrix $C=(c_{\alpha\beta})$ such that $P E_{\alpha}^{\dagger}E_{\beta}P=c_{\alpha\beta}P$ for all Kraus indices $\alpha,\beta$.
\item \label{THRMX273}
There exists a quantum channel $\mathcal R:\mathcal L(\H_B)
  \to \mathcal L(\mathcal H_{\emph{code}})$ such that $(\mathcal{R}\circ\E)(\rho)=\rho$ for every state $\rho$
supported on $\H_{\emph{code}}$.
\end{enumerate}
\et
As the equivalence of conditions \ref{THRMX272} and \ref{THRMX273} have been established by the seminal work of Knill and Laflamme on quantum error correction~\cite{Knill_1997}, Theorem~\ref{THRMX27} yields a characterization of quantum error correction in terms of information loss. Specifically, it demonstrates that the capacity to perfectly recover quantum information is fundamentally linked to the vanishing of the intrinsic information loss $K(\rho,\E)$ for all states supported on $\H_{\text{code}}$. This operational perspective allows us to characterize error-correcting codes purely from an information-theoretic standpoint, bypassing the need to explicitly construct a physical recovery channel $\mathcal{R}$. Furthermore, this formulation naturally lends itself to approximate quantum error-correcting codes, where a small but non-zero intrinsic information loss implies that a near-perfect recovery is possible.

To rigorously establish the link to approximate quantum error correction, we can relax the condition of exact universal pristineness by bounding the intrinsic information loss. In particular, if the maximal intrinsic information loss over the codespace is strictly bounded, i.e., $\sup_{\rho} K(\rho,\E) \le \epsilon$ for all states supported on $\H_{\text{code}}$, then the channel's Kraus operators are constrained to approximately satisfy the Knill-Laflamme conditions $P E_{\alpha}^{\dagger} E_{\beta} P \approx c_{\alpha\beta} P$. By standard continuity bounds in approximate error correction~\cite{Beny_2010, Ng_2010}, this constraint guarantees the existence of a completely positive, trace-preserving recovery map $\mathcal{R}$ such that the worst-case root fidelity of recovery, namely,
\[
\min_{\ket{\psi} \in \H_{\text{code}}} F_{\mathrm r}\big(\psi, (\mathcal{R}\circ\E)(\psi)\big)\, , \quad F_{\text{r}}(\rho,\sigma)=\norm{\sqrt{\rho}\sqrt{\sigma}}_1\, ,
\]
is bounded from below by $1 - \mathcal{O}(g(\epsilon))$, where $g(\epsilon)$ is a monotonically increasing function that vanishes as $\epsilon \to 0$. Because perfect recovery corresponds to a root fidelity of exactly $1$, this lower bound strictly constrains the worst-case fidelity to remain arbitrarily close to $1$ for sufficiently small $\epsilon$. As a result, if the intrinsic information loss $K(\rho,\E)$ is small, the recovered state $(\mathcal{R}\circ\E)(\psi)$ is fundamentally indistinguishable from the initial state $\psi$. As such, a small but non-zero intrinsic information loss implies that a near-perfect recovery is always possible without needing to specify the physical mechanics of $\mathcal{R}$---an operational feature that we will exploit in the next section to analyze information retrieval in the context of black hole evaporation.

\section{Information loss in the Hayden-Preskill model} 
The Hayden--Preskill (HP) model for black hole evaporation provides a natural testing ground for our operational framework for information loss, as it describes a concrete process in which quantum information is injected into a black hole, scrambled by its internal dynamics, and subsequently released into the Hawking radiation~\cite{Ha75,Hayden_2007}. In particular, an old black hole past the Page time acts as an ``information mirror'', making newly injected information rapidly recoverable from the collected radiation~\cite{Page_1993}. In the original formulation of Hayden and Preskill, information recovery is diagnosed through the correlations between Alice's input and an auxiliary reference system, while explicit decoding procedures were later constructed by Yoshida and Kitaev ~\cite{Yoshida:2017non}. Here we take a different, reference-free approach: viewing the evaporation process directly as a quantum channel from Alice's input to the collected radiation, we use the intrinsic information loss introduced above to characterize how information becomes accessible in the Hawking radiation.

For this, suppose that Alice throws a $k$-qubits into a black hole that is already maximally entangled with its early radiation. In the HP model, the internal scrambling dynamics of the black hole are modeled by a Haar-random unitary $U$ acting on the combined system of Alice's qubits $A$ (of dimension $d_A=2^k$) and the black hole $H$ (of dimension $d_H=2^n$). Following this scrambling, the black hole emits a subsystem $R'$ consisting of $s$ qubits as new Hawking radiation, leaving behind a remaining black hole system $H'$ of dimension $d_{H'}=2^{k+n-s}$. 

Bob, who has collected both the early radiation $R$ (maximally entangled with $H$ prior to Alice's input) and the newly emitted radiation $R'$, attempts to recover Alice's state. This setup naturally defines a quantum channel $\E_{U,s}:\mathcal{L}(\H_A)\to \mathcal{L}(\H_{RR'})$ given by 
\[
\E_{U,s}(\rho)=\Tr_{H'}\big((U\otimes \mathds{1}_R)(\rho\otimes \dyad{\Psi}{\Psi}_{HR})(U\otimes \mathds{1}_R)^{\dag}\big)\, ,
\]
where $\dyad{\Psi}{\Psi}_{HR}=\frac{1}{\sqrt{d_H}}\sum_{i=1}^{d_H}\ket{i}_H\ket{i}_R$ is the maximally entangled state of $HR$.

\begin{figure}[t]
    \centering
    \includegraphics[width=\columnwidth]
    {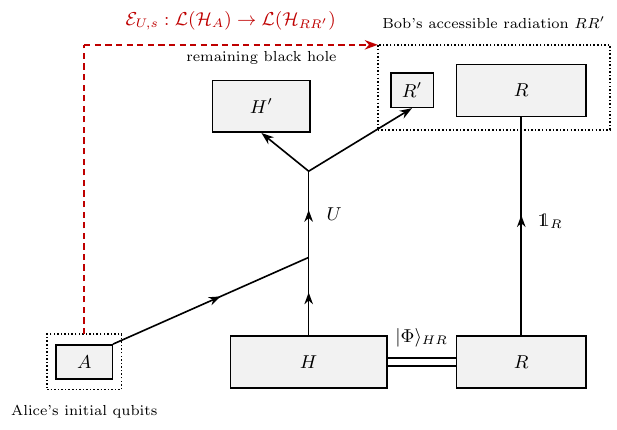}
    \caption{
   Schematic of the Hayden--Preskill model of black hole evaporation, which is modeled as a quantum channel $\E_{U,s}:\mathcal{L}(\mathcal{H}_A)\to \mathcal{L}(\mathcal{H}_{RR'})$. In our formulation, no auxiliary reference system for Alice is required for the recovery of information from the Hawking radiation.}
    \label{fig:HP_channel}
\end{figure}

Our next result establishes that the channel $\E_{U,s}$ approaches universal pristineness on the whole space $\H_A$ as $s$ gets large, which we refer to as \emph{asymptotic universal pristineness}. Moreover, we show how asymptotic universal pristineness of $\E_{U,s}$ implies that the intrinsic information loss $K(\rho_A,\E_{U,s})$ approaches 0 in Haar mean as $s$ gets large. For the statement of the result, let $\rho_A$ be an arbitrary state with spectral decomposition $\rho_A=\sum_{i=1}^{m}p_i\dyad{i}{i}$. We assume that $m>1$ and $p_i>0$ for all $i$, so that $1<m\leq d_A$. Let $C_p=\sum_{i\neq j}\sqrt{p_ip_j}$, and 
\[
\Delta_{\text{HP}}(\rho_A,s)=\min \left\{1-p_{\text{max}},C_p\sqrt{\frac{d_A(d_{H'}^2-1)}{d_{R'}^2d_{H'}^2-1}}\right\}\, ,
\]
where $p_{\text{max}}=\max\{p_i\}_{i=1}^{m}$. We also let $\epsilon(s)$ be the measure of deviation from universal pristineness given by
\[
\epsilon(s)= \sup_{\substack{
        |\psi\rangle,|\phi\rangle\in\mathcal H_A\\
        \langle\psi|\phi\rangle=0
    }}
    \mathbb E_U
    F_{\mathrm r}
    \left(
        \mathcal E_{U,s}(\psi),
        \mathcal E_{U,s}(\phi)
    \right)\, ,
\]
where $F_{\text{r}}(\rho,\sigma)=\norm{\sqrt{\rho}\sqrt{\sigma}}_1$ is the root fidelity and $\mathbb{E}_U$ denotes the Haar mean. 

\bt
\label{thm:HP_asymptotic_universal_pristine}
The following statements hold.
\begin{enumerate}[i.]
\item
$\epsilon(s)\leq 2^{k/2-s}$.
\item
$\Delta_{\emph{HP}}(\rho_A,s)\leq C_p\,2^{k/2-s}$.
\item
$\mathbb{E}_U K(\rho_A,\E_{U,s})\leq f_m\big(\Delta_{\emph{HP}}(\rho_A,s)\big)$, where $f_m$ is the function given by \eqref{FMX71}.
\end{enumerate}
In particular, $\E_{U,s}$ is asymptotically universally pristine on $\H_A$, and the intrinsic information loss $K(\rho_A,\E_{U,s})$ approaches $0$ (in Haar mean) for large $s$.
\et  
As Theorem~\ref{THRMX27} establishes that universal pristineness is equivalent to the existence of a perfect recovery channel for Bob, it follows from Theorem~\ref{thm:HP_asymptotic_universal_pristine} that an unknown quantum state falling into a Hayden-Preskill black hole admits near perfect recovery as the collected radiation grows. In this sense, universal pristineness provides a channel-theoretic characterization of the recoverability underlying the explicit Hayden--Preskill decoding procedures.

At first sight, the bound $\epsilon(s)\leq 2^{k/2-s}$ exhibits a scale $s\sim k/2$, which differs from the usual Hayden--Preskill scale $s\sim k$ obtained from reference-assisted decoupling. This should not be interpreted as a sharper threshold for full information recovery. More precisely, our information-loss bound becomes small when $\Delta_{\mathrm{HP}}(\rho_A,s)\ll1$, and therefore also depends on the input-dependent factor $C_p$. Furthermore, the two approaches quantify different finite-error tasks: our bound controls the Haar-averaged distinguishability of orthogonal inputs and the associated operational information loss, whereas the standard Hayden--Preskill criterion tests the coherent recovery of an arbitrary input, including its correlations with an external reference, by a single decoding channel. Their finite-error scaling need not coincide. In the exact zero-error limit with $s\to +\infty$, however, the distinction disappears, since universal pristineness and perfect quantum recoverability are equivalent by Theorem~\ref{THRMX27}.

\section{Discussion}
In the work, we introduced a rigorous mathematical framework for the formulation of information loss in the operational setting prepare-evolve-measure (PEM) scenarios for quantum systems. By minimizing the information loss over all pure state ensemble decompositions of a fixed state $\rho$, and over all POVMs on the output of a fixed channel $\E$, we arrived at an intrinsic notion of quantum information loss $K(\rho,\E)$ for any state-channel pair $(\rho,\E)$. We proved $0\leq K(\rho,\E)\leq S(\rho)$, with the lower bound being achieved by unitary channels and the upper bound being achieved for completely depolarizing channels. We then showed how the vanishing of information loss with respect to all states supported on a fixed code-space is equivalent to the channel-theoretic condition we termed `universal pristiness', which we also showed is equivalent to the existence of a perfect recovery channel for such states.

Finally, we applied our framework to the Hayden-Preskill model of black hole evaporation, where we were able to show that the associated evaporation channel going from a system of qubits that was thrown into the black hole to the total radiation becomes asymptotically universally pristine as the late radiation grows. While the original work of Hayden-Preskill required a reference system initially entangled with the in-falling qubits for information recovery, our approach is purely channel-theoretic, as we were able to characterize perfect information recovery in terms of the universally pristine condition.

Beyond the specific mechanics of the Hayden-Preskill model, substantial progress has recently been achieved in tractable semiclassical and holographic models of quantum gravity, where quantum extremal surfaces, entanglement islands, and replica wormholes reproduce a Page curve consistent with unitary black hole evaporation \cite{Pe20,Almheiri:2019hni,Almheiri:2019psf,Penington:2019kki,Almheiri:2019qdq}. These developments diagnose information recovery primarily through the fine-grained entropy of the Hawking radiation. Our framework however provides a complementary operational diagnostic that does not rely on computing the Page curve. By treating black hole evaporation as a quantum channel, the intrinsic information loss $K(\rho,\mathcal{E})$ could quantify how much information about the in-falling state remains inaccessible through measurements of the Hawking radiation. It would therefore be interesting to evaluate $K(\rho,\mathcal{E})$ in semiclassical evaporation models and, ultimately, in more realistic descriptions of evaporating black holes. Such studies could provide a direct characterization of the time-dependent flow and recoverability of black hole information.

It would also be interesting to apply our framework in a thermodynamic setting, where a quantum system interacts with a thermal reservoir at inverse temperature $\beta$. In such a context, Landauer's principle takes the form of an inequality $\beta \Delta Q\ge\Delta S$, where $\Delta S$ is the change in entropy of the system and $\Delta Q$ is the heat dissipated to the reservoir. Because the intrinsic information loss $K(\rho,\E)$ is independent of any observer's subjective choices regarding preparation ensembles and measurement strategies, it characterizes an objective, physical feature of the state-channel dynamics. Therefore, modeling the system-reservoir interaction as a quantum channel $\E$ and exploring it in terms of $K(\rho,\E)$ presents a promising avenue to investigate how the fundamental limits of information retrieval relate to physical energy dissipation. Future work in this direction may therefore uncover novel, channel-theoretic corrections to Landauer's principle for finite-size reservoirs~\cite{Reeb_2014}, potentially bounding the thermodynamic cost of information loss induced by specific open-system dynamics.\\~\\

\emph{Acknowledgments.}---J.F. is supported by Hainan University startup fund for the project ``Spacetime from Quantum Information", and also by the Hainan Provincial Natural Science Foundation of China under Grant No.~126MS0010. W.Z.G is supported by the Seventh Young Faculty Development Program of Huazhong University of Science and Technology, and also by the Hubei Provincial Natural Science Foundation of China under grant No.~2025AFB557.


\bibliography{references}

\clearpage
\newpage


\maketitle
\onecolumngrid
\vspace{1cm}

\begin{center}\large \textbf{Quantum information loss} \\
\textbf{--- Supplemental Material ---}\\
\end{center}

\appendix

\vspace{1em}

In this Supplemental Material we provide proofs of Theorems~1-4 from main text. The proofs were found with the help of AI tools, and were carefully checked by the authors.

\section{Proof of Theorem~1} \label{app:A}

\begin{theoremone} 
Let $\{(p_i,\rho_i)\}$ be an ensemble of states of $A$, and let $\mathcal{E}:\mathcal{L}(\mathcal{H}_A)\to \mathcal{L}(\mathcal{H}_B)$ be a quantum channel. Then there exists a POVM $\{N_j\}$ on system $B$ such that $K=0$ if and only if $\mathcal{E}(\rho_i)\mathcal{E}(\rho_k)=0$ for all $i\neq k$. 
\end{theoremone}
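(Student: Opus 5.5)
The plan is to turn the condition $K=0$ into a statement about the supports of the conditional outcome distributions $Q(\cdot|i)=\Tr(\E(\rho_i)N_\cdot)$, and then into a statement about the output operators. Since $K=H(P)-H(Q)$ is the conditional entropy of Alice's index given Bob's outcome, and every $p_i>0$, we have $K=0$ if and only if for each $j$ with $Q(j)>0$ there is exactly one $i$ with $P(i,j)>0$. Equivalently, for every $j$ and every $i\neq k$,
\[
\Tr\big(\E(\rho_i)N_j\big)\,\Tr\big(\E(\rho_k)N_j\big)=0 .
\]
I will state this as a preliminary step, using only that conditional entropy vanishes exactly when the conditional distribution is deterministic on the support of $Q$.

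For sufficiency, assume $\E(\rho_i)\E(\rho_k)=0$ for $i\neq k$. Let $\Pi_i$ be the projector onto the support (range) of $\E(\rho_i)$. Because these positive operators multiply to zero, their ranges are mutually orthogonal: $\E(\rho_i)\E(\rho_k)=0$ gives $\operatorname{ran}\E(\rho_k)\subseteq\ker\E(\rho_i)=\operatorname{ran}\E(\rho_i)^\perp$. Hence $\Pi_i\Pi_k=0$, so $\sum_i\Pi_i\leq \mathds 1_B$, and $\{\Pi_i\}\cup\{\mathds 1_B-\sum_i\Pi_i\}$ is a projective measurement. Then $\Tr(\E(\rho_i)\Pi_k)=\delta_{ik}$, and the outcome ``rest'' has probability zero for every $i$. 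So each outcome with positive probability is compatible with exactly one $i$, and $K=0$.

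For necessity, suppose some POVM $\{N_j\}$ gives $K=0$. Fix $i\neq k$ and, for each $j$, apply the preliminary step: either $\Tr(\E(\rho_i)N_j)=0$ or $\Tr(\E(\rho_k)N_j)=0$. For positive operators, $\Tr(AN)=0$ implies $N^{1/2}A^{1/2}=0$, hence $NA=0$ and $AN=0$. Let $J_i=\{j:\Tr(\E(\rho_i)N_j)>0\}$. These sets are pairwise disjoint, and $N_j\E(\rho_i)=0$ for every $j\notin J_i$. Therefore
\[
\E(\rho_i)=\sum_j N_j\,\E(\rho_i)=\sum_{j\in J_i}N_j\,\E(\rho_i),
\]
and by the same argument $\E(\rho_k)=\E(\rho_k)\sum_{j\in J_k}N_j$. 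This gives
\[
\E(\rho_k)\E(\rho_i)=\sum_{j\in J_k}\sum_{l\in J_i}\E(\rho_k)\,N_jN_l\,\E(\rho_i).
\]
This is the main obstacle, because POVM elements need not be orthogonal, so $N_jN_l$ need not vanish for $j\neq l$. Instead of expanding the product termwise, I will use the operator $M_i=\sum_{j\in J_i}N_j$, which satisfies $0\leq M_i\leq\mathds 1$. We have $\Tr(\E(\rho_i)M_i)=1$, while $\Tr(\E(\rho_k)M_i)=0$ because $J_i\cap J_k=\emptyset$. The first identity gives $\Tr(\E(\rho_i)(\mathds 1-M_i))=0$, so $(\mathds 1-M_i)\E(\rho_i)=0$ and $M_i\E(\rho_i)=\E(\rho_i)$. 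The second gives $M_i\E(\rho_k)=0$. Consequently
\[
\E(\rho_k)\E(\rho_i)=\E(\rho_k)M_i\E(\rho_i)=\big(M_i\E(\rho_k)\big)^\dagger\E(\rho_i)=0,
\]
where the middle step uses that $M_i$ and $\E(\rho_k)$ are self-adjoint. This completes the converse.
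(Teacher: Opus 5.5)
Your proof is correct and follows essentially the same route as the paper. Your $M_i=\sum_{j\in J_i}N_j$ is the paper's coarse-grained $F_k=\sum_{j:g(j)=k}N_j$, and the same positivity lemma gives $\E(\rho_k)\E(\rho_i)=\E(\rho_k)M_i\E(\rho_i)=0$; sufficiency uses the same support-projector measurement. The only difference is that you get the disjointness of the $J_i$ directly from the vanishing of the conditional entropy, rather than by invoking the perfect retrodiction map of Theorem~2, which makes your argument self-contained.
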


\begin{proof}
\noindent $(\Rightarrow)$
Suppose there exists a POVM $\{N_j\}_{j\in J}$ on $B$ such that $K=0$. Let $f:I\to J$ be the Markov kernel induced by this PEM scenario, with $Q(j|i)=\Tr(\E(\rho_i)N_j)$. By Theorem~\ref{ECXN57}, there exists a perfect retrodiction map $g:J\to I$ such that $g\circ f=\id_I$.

For each $k\in I$, let $F_k=\sum_{j:g(j)=k}N_j$. Since the fibers of $g$ partition $J$, the collection $\{F_k\}_{k\in I}$ is a POVM on $B$. Moreover, by the definition of composition, $\Tr[\E(\rho_i)F_k]=(g\circ f)(k|i)$. Hence
\[
\Tr[\E(\rho_i)F_k]=\delta_{ki}
\]
for all $i,k\in I$.

Recall that if $A,B\geq 0$ and $\Tr(AB)=0$, then $AB=BA=0$. Indeed, $A^{1/2}BA^{1/2}$ is positive with trace zero and hence vanishes. Since $A^{1/2}BA^{1/2}=(B^{1/2}A^{1/2})^\dagger(B^{1/2}A^{1/2})$, it follows that $B^{1/2}A^{1/2}=0$, and hence $BA=AB=0$.

Now fix $i\neq k$. Since $\Tr[\E(\rho_i)F_k]=0$, it follows that $\E(\rho_i)F_k=0$. Moreover, $\Tr[\E(\rho_k)F_k]=1$, and therefore $\Tr[\E(\rho_k)(\mathds{1}_B-F_k)]=0$. It follows that $(\mathds{1}_B-F_k)\E(\rho_k)=0$, so $\E(\rho_k)=F_k\E(\rho_k)$. Thus
\[
\E(\rho_i)\E(\rho_k)=\E(\rho_i)F_k\E(\rho_k)=0.
\]
Hence $\E(\rho_i)\E(\rho_k)=0$ for all $i\neq k$.

\smallskip

\noindent $(\Leftarrow)$
Conversely, suppose that $\E(\rho_i)\E(\rho_k)=0$ for all $i\neq k$. Since the operators $\E(\rho_i)$ are positive, their supports are pairwise orthogonal. Let $\Pi_i$ be the orthogonal projector onto $\operatorname{supp}(\E(\rho_i))$, and let $N_{\mathrm{rest}}=\mathds{1}_B-\sum_{i\in I}\Pi_i$. Then $\{\Pi_i\}_{i\in I}\cup\{N_{\mathrm{rest}}\}$ is a PVM, hence a POVM, on $B$.

Take $J=I\sqcup\{\mathrm{rest}\}$, and let $f:I\to J$ be the Markov kernel induced by this measurement. Then $Q(k|i)=\Tr[\E(\rho_i)\Pi_k]=\delta_{ki}$ and $Q(\mathrm{rest}|i)=0$. Fix $i_0\in I$ and define $g:J\to I$ by $g(i)=i$ for $i\in I$ and $g(\mathrm{rest})=i_0$. Then $g\circ f=\id_I$, so $g$ is a perfect retrodiction map. By Theorem~\ref{ECXN57}, the corresponding PEM scenario satisfies $K=0$.
\end{proof}

\section{Proof of Theorem~2}

\begin{theoremtwo} 
Let $(\{(p_i,\rho_i)\}_{i\in I},\mathcal{E},\{N_j\}_{j\in J})$ be a PEM scenario. Then $K=0$ if and only if there exists a function $g:J\to I$ such that $g\circ f=\id_I$. Such a function is called a perfect retrodiction map.
\end{theoremtwo}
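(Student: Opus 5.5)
The plan is to work directly with the joint distribution $P(i,j)=p_iQ(j|i)$ and the identity $K=H(P)-H(Q)=\sum_j Q(j)\,H(P(\cdot|j))$. Here $P(i|j)=P(i,j)/Q(j)$ for those $j$ with $Q(j)>0$. So $K$ is a $Q$-weighted average of nonnegative Shannon entropies. Consequently $K=0$ if and only if, for every $j$ with $Q(j)>0$, the conditional distribution $P(\cdot|j)$ is a point mass. Equivalently, for each such $j$ there is at most one $i$ with $Q(j|i)>0$; this uses $p_i>0$ for all $i$.

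For the direction $(\Rightarrow)$, assume $K=0$ and define $g:J\to I$ as follows.
\begin{enumerate}[i.]
\item For $j$ with $Q(j)>0$, let $g(j)$ be the unique $i$ with $Q(j|i)>0$.
\item For $j$ with $Q(j)=0$, let $g(j)=i_0$ for an arbitrary fixed $i_0\in I$. For such $j$ we have $Q(j|i)=0$ for all $i$, again because $p_i>0$.
\end{enumerate}
Next compute $(g\circ f)(k|i)=\sum_{j:\,g(j)=k}Q(j|i)$. If $Q(j|i)>0$, then $Q(j)\ge p_iQ(j|i)>0$, and by uniqueness $g(j)=i$. Hence every $j$ in the support of $Q(\cdot|i)$ lies in $g^{-1}(i)$. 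It follows that $(g\circ f)(i|i)=\sum_j Q(j|i)=1$ and $(g\circ f)(k|i)=0$ for $k\neq i$, which is exactly $g\circ f=\id_I$.

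For the direction $(\Leftarrow)$, suppose $g\circ f=\id_I$. Then $\sum_{j\in g^{-1}(k)}Q(j|i)=\delta_{ki}$, so $Q(j|i)=0$ whenever $g(j)\neq i$. Therefore, for each $j$, the only index $i$ that can have $P(i,j)>0$ is $i=g(j)$. Each conditional distribution $P(\cdot|j)$ with $Q(j)>0$ is then a point mass, and the averaging identity gives $K=0$. Alternatively, one can observe that $P(i,j)=p_i Q(j|i)$ is supported on the graph of $g$. This makes $i$ a deterministic function of $j$ almost surely, so $H(P)=H(Q)$.

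I do not expect a genuine obstacle. The only care needed is bookkeeping with zero-probability outcomes. First, the chain-rule identity must be restricted to $j$ with $Q(j)>0$. Second, $g$ must be defined arbitrarily on null outcomes. Third, the assumption $p_i>0$ is essential: it is what lets us pass from vanishing joint probabilities $P(i,j)$ to vanishing conditional probabilities $Q(j|i)$.
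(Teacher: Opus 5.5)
Your proof is correct and follows essentially the same route as the paper. Both write $K$ as the $Q$-weighted average of posterior entropies, build $g$ by sending each outcome with $Q(j)>0$ to its unique supporting label (and arbitrarily on null outcomes), use $p_i>0$ to pass between $P(i,j)$ and $Q(j|i)$, and for the converse deduce $Q(j|i)=0$ whenever $g(j)\neq i$, so every posterior is a point mass.
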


\begin{proof}
Let $f:I\to J$ be the Markov kernel induced by the PEM scenario, with $Q(j|i)=\Tr(\E(\rho_i)N_j)$, and let $Q(j)=\sum_i p_iQ(j|i)$. For every $j$ with $Q(j)>0$, let $P(i|j)=p_iQ(j|i)/Q(j)$ be the posterior probability that Alice prepared $\rho_i$ given that Bob obtained outcome $j$. Then
\[
K=\sum_{j:Q(j)>0}Q(j)\left(-\sum_{i\in I}P(i|j)\log P(i|j)\right).
\]
Since every term in this sum is nonnegative, $K=0$ if and only if, for every $j$ with $Q(j)>0$, the distribution $\{P(i|j)\}_{i\in I}$ is concentrated at a single preparation label.

Suppose first that $K=0$. For every $j$ with $Q(j)>0$, let $g(j)$ be the unique label such that $P(g(j)|j)=1$, and define $g(j)$ arbitrarily when $Q(j)=0$. If $Q(j|i)>0$, then $Q(j)\geq p_iQ(j|i)>0$. Since $p_i>0$, we also have $P(i|j)>0$, and hence $g(j)=i$. It follows that, for all $i,k\in I$,
\[
(g\circ f)(k|i)=\sum_{j:g(j)=k}Q(j|i)=\delta_{ki}.
\]
Thus $g\circ f=\id_I$, so $g$ is a perfect retrodiction map.

Conversely, suppose there exists a perfect retrodiction map $g:J\to I$, so that $g\circ f=\id_I$. If $Q(j|i)>0$, then $g(j)=i$; otherwise, $(g\circ f)(g(j)|i)\geq Q(j|i)>0$, contradicting $g\circ f=\id_I$.

Now fix $j$ with $Q(j)>0$. Since $Q(j)=\sum_i p_iQ(j|i)>0$, there exists an $i$ such that $Q(j|i)>0$, and hence $i=g(j)$. Moreover, $Q(j|k)=0$ for every $k\neq g(j)$. Therefore,
\[
P(i|j)=\frac{p_iQ(j|i)}{Q(j)}=\delta_{i,g(j)}.
\]
Thus, for every occurring outcome $j$, the posterior distribution is concentrated entirely at the single label $g(j)$. Its entropy is therefore zero, and hence $K=0$.
\end{proof}


\section{Proof of Theorem 3}
\label{sec:supp-proof-thm3}

\begin{theoremthree}
Suppose $\{E_{\alpha}\}$ is a collection of Kraus operators for the channel $\E$, and let $P$ be the orthogonal projection operator onto $\mathcal{H}_{\text{code}}$. Then the following statements are equivalent.
\begin{enumerate}
\item 
$\E$ is universally pristine with respect to $\mathcal{H}_{\emph{code}}$.
\item \label{THRMX2711}
$K(\rho,\E)=0$ for every state $\rho$ supported on $\mathcal{H}_{\emph{code}}$.
\item 
There exists a positive semi-definite
matrix $C=(c_{\alpha\beta})$ such that $P E_{\alpha}^{\dagger}E_{\beta}P=c_{\alpha\beta}P$ for all Kraus indices $\alpha,\beta$.
\item 
There exists a quantum channel $\mathcal R:\mathcal L(\mathcal{H}_B)
  \to \mathcal L(\mathcal H_{\emph{code}})$ such that $(\mathcal{R}\circ\E)(\rho)=\rho$ for every state $\rho$
supported on $\mathcal{H}_{\emph{code}}$.
\end{enumerate}
\end{theoremthree}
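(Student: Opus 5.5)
We prove (i)$\Leftrightarrow$(ii), then (i)$\Leftrightarrow$(iii), and take (iii)$\Leftrightarrow$(iv) from Knill and Laflamme~\cite{Knill_1997}.

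\emph{(i)$\Rightarrow$(ii).} Let $\rho$ be supported on $\H_{\text{code}}$ with spectral decomposition $\rho=\sum_i p_i\dyad{i}{i}$. Each eigenvector $\ket{i}$ lies in $\H_{\text{code}}$, and the eigenvectors are pairwise orthogonal. Universal pristineness therefore gives $\E(\dyad{i}{i})\E(\dyad{k}{k})=0$ for $i\neq k$. Theorem~\ref{PTX71} then supplies a POVM with $K=0$, so $K(\rho,\E)=0$.

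\emph{(ii)$\Rightarrow$(i).} Take orthogonal unit vectors $\ket{\psi},\ket{\phi}\in\H_{\text{code}}$ and set $\rho=\tfrac12(\psi+\phi)$. The infimum defining $K(\rho,\E)$ need not be attained, so we first argue attainment, or bypass it:
\begin{itemize}
\item The set of pure decompositions of $\rho$ is compact once the number of elements is bounded; Carath\'eodory-type arguments allow this, and we may discard zero-weight terms.
\item The set of POVMs on $B$ is compact once the number of outcomes is bounded, and conditional entropy is continuous.
\end{itemize}
Hence $K(\rho,\E)=0$ is attained by some pure decomposition $\{(q_i,\chi_i)\}$ and some POVM, and Theorem~\ref{PTX71} gives $\E(\chi_i)\E(\chi_k)=0$ for $i\neq k$.

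The cleaner route is to avoid attainment. Every pure decomposition of a rank-two state $\rho=\tfrac12 P_2$ has the form $\ket{\chi_i}=\sqrt{2}\sum_a u_{ia}\ket{e_a}/\sqrt{q_i}$ with $U$ an isometry. So $K(\rho,\E)=0$ means we can realize an ensemble of states in $\mathrm{span}\{\psi,\phi\}$ with pairwise orthogonally supported outputs. Orthogonal output supports force $\E(\chi_i)$ to be linearly independent, hence the decomposition has at most two elements. Two-element decompositions of $\tfrac12 P_2$ are necessarily orthonormal bases with equal weights, since $\rho$ is maximally mixed on $\mathrm{span}\{\psi,\phi\}$.

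This does not yet give the specific pair $\psi,\phi$, so we exploit freedom in the choice of $\rho$. Use $\rho=p\,\psi+(1-p)\phi$ with $p\neq\tfrac12$. Its eigenbasis is then unique, and we need every zero-loss decomposition to be the spectral one. Zero loss with orthogonal output supports implies orthogonality of the inputs, because $\braket{\chi_i}{\chi_k}$ can be recovered via the Kraus operators: $\sum_\alpha \bra{\chi_i}E_\alpha^\dagger E_\alpha\ket{\chi_k}=\braket{\chi_i}{\chi_k}$, and a Cauchy--Schwarz type argument on supports kills this. An orthogonal pure decomposition of a nondegenerate $\rho$ must be its eigenbasis, namely $\psi,\phi$. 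This yields $\E(\psi)\E(\phi)=0$.

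\emph{(i)$\Leftrightarrow$(iii).} Assume (iii). For $\braket{\psi}{\phi}=0$ in the code,
\[
\Tr\big(\E(\psi)\E(\phi)\big)=\sum_{\alpha\beta}\big|\bra{\psi}E_\alpha^\dagger E_\beta\ket{\phi}\big|^2=\sum_{\alpha\beta}|c_{\alpha\beta}|^2|\braket{\psi}{\phi}|^2=0,
\]
and positivity of the two operators gives $\E(\psi)\E(\phi)=0$. Conversely, assume (i) and fix $\alpha,\beta$. The same trace identity shows $\bra{\psi}E_\alpha^\dagger E_\beta\ket{\phi}=0$ whenever $\psi\perp\phi$ in the code. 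So each $PE_\alpha^\dagger E_\beta P$ has vanishing off-diagonal entries in every orthonormal basis of $\H_{\text{code}}$. Applying this to the basis $(\ket{a}\pm\ket{b})/\sqrt2$ forces equal diagonal entries, so $PE_\alpha^\dagger E_\beta P=c_{\alpha\beta}P$. Positivity of $C$ follows from $\sum\bar v_\alpha v_\beta c_{\alpha\beta}=\norm{\sum_\beta v_\beta E_\beta\ket{\psi}}^2\ge0$.

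The main obstacle is (ii)$\Rightarrow$(i), specifically showing that zero loss forces the relevant decomposition to be orthogonal. We handle it via the nondegenerate choice of $\rho$ together with the trace identity above.
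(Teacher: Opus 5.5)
Your proof is correct and follows the paper's structure closely:
\begin{itemize}
\item (i)$\Rightarrow$(ii) uses the spectral ensemble together with Theorem~\ref{PTX71}.
\item (ii)$\Rightarrow$(i) uses a nondegenerate mixture $p\,\psi+(1-p)\phi$ with $p\neq\tfrac12$, whose zero-loss decomposition must be orthogonal and hence spectral.
\item (i)$\Leftrightarrow$(iii) uses the Kraus trace identity and the $(\ket{a}\pm\ket{b})/\sqrt2$ trick.
\item (iii)$\Leftrightarrow$(iv) is cited from Knill--Laflamme.
\end{itemize}

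The one real difference is how you get from $\E(\chi_i)\E(\chi_k)=0$ to $\braket{\chi_i}{\chi_k}=0$. The paper uses contractivity of the trace distance, $1=D(\E(\chi_a),\E(\chi_b))\le D(\chi_a,\chi_b)\le 1$. You instead use the trace identity: it forces every $\bra{\chi_i}E_\alpha^\dagger E_\beta\ket{\chi_k}$ to vanish, and summing the diagonal terms with $\sum_\alpha E_\alpha^\dagger E_\alpha=\mathds{1}$ gives $\braket{\chi_i}{\chi_k}=0$. This is valid. What does the work is the trace identity, not a Cauchy--Schwarz argument. Your route reuses the single computation you already need for (i)$\Leftrightarrow$(iii). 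The paper's route is representation-free and relies only on a standard monotonicity property.

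Two side remarks. First, your ``cleaner route'' does not actually avoid attainment. It still starts from an actual ensemble and POVM with $K=0$, so you rely on the optimum being attained, which the paper simply stipulates by defining $K(\rho,\E)$ as a minimum. Second, linear independence of the $\E(\chi_i)$ does not by itself bound the decomposition to two elements. It only gives at most four, the dimension of the Hermitian operators on $\operatorname{span}\{\ket{\psi},\ket{\phi}\}$. This is harmless, because the orthogonality argument you use in the end already gives at most two elements.
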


\begin{proof}
Throughout this proof, we write
$\psi:=|\psi\rangle\langle\psi|$ for a pure state.  We shall repeatedly use
that, for positive semidefinite operators $X,Y\geq0$,
\begin{equation}
 \Tr(XY)=0 \quad\Longleftrightarrow\quad XY=0.
 \label{eq:supp-positive-orthogonality}
\end{equation}
Indeed, $X^{1/2}YX^{1/2}\geq0$ has zero trace if and only if it vanishes,
which is equivalent to $Y^{1/2}X^{1/2}=0$.  For density operators, these
conditions are also equivalent to orthogonality of their supports and to
unit trace distance.  If $\dim\mathcal H_{\mathrm{code}}=1$, all four
conditions in Theorem~3 are immediate; below we therefore assume
$\dim\mathcal H_{\mathrm{code}}\geq2$.

We prove the equivalence of conditions (i)--(iii) by the cycle
$(\mathrm{i})\Rightarrow(\mathrm{ii})\Rightarrow(\mathrm{iii})
\Rightarrow(\mathrm{i})$.

\paragraph*{\boldmath$(\mathrm{i})\Rightarrow(\mathrm{ii})$.}
Let $\rho$ be supported on $\mathcal H_{\mathrm{code}}$, and choose a
spectral decomposition
\begin{equation}
 \rho=\sum_{a=1}^{r}\lambda_a |a\rangle\langle a|,
 \qquad \lambda_a>0.\nonumber
\end{equation}
The eigenvectors are mutually orthogonal.  Universal pristineness therefore
gives
\begin{equation}
 \E(|a\rangle\langle a|)\,
 \E(|b\rangle\langle b|)=0,
 \qquad a\neq b.\nonumber
\end{equation}
By Theorem~1, the spectral ensemble admits a POVM with zero
information loss.  Since $K(\rho,\E)$ is minimized over all pure-state
ensemble decompositions of $\rho$ and all output POVMs, it follows that
$K(\rho,\E)=0$.

\paragraph*{\boldmath$(\mathrm{ii})\Rightarrow(\mathrm{iii})$.}
We first show that condition (ii) forces universal pristineness.  Let
$|\psi\rangle,|\phi\rangle\in\mathcal H_{\mathrm{code}}$ be arbitrary
orthogonal unit vectors, and choose $t\in(0,1)$ with $t\neq\tfrac12$.  The
state
\begin{equation}
 \rho_t=t\psi+(1-t)\phi
\nonumber
\end{equation}
has rank two and a nondegenerate spectrum.  By condition (ii),
$K(\rho_t,\E)=0$.  By the definition of $K$ as a minimum, there is a
pure-state ensemble decomposition
\begin{equation}
 \rho_t=\sum_a q_a\chi_a,
 \qquad \chi_a:=|\chi_a\rangle\langle\chi_a|,
 \qquad q_a>0,
 \label{eq:supp-zero-loss-decomp}
\end{equation}
and an output POVM for which the information loss vanishes.  Theorem~1 then
implies
\begin{equation}
 \E(\chi_a)\E(\chi_b)=0,
 \qquad a\neq b.
 \nonumber
\end{equation}
Thus $D(\E(\chi_a),\E(\chi_b))=1$, where $D$ denotes trace distance.
Contractivity of trace distance under $\E$ yields
\begin{equation}
 1=D(\E(\chi_a),\E(\chi_b))
 \leq D(\chi_a,\chi_b)\leq1,\nonumber
\end{equation}
so the vectors $|\chi_a\rangle$ are mutually orthogonal.  Moreover, every
$|\chi_a\rangle$ belongs to $\operatorname{supp}\rho_t
=\operatorname{span}\{|\psi\rangle,|\phi\rangle\}$.  Consequently,
Eq.~\eqref{eq:supp-zero-loss-decomp} is an orthogonal decomposition of a
rank-two state.  Since $\rho_t$ is nondegenerate, this decomposition must be
its spectral decomposition, up to phases and relabeling.  Therefore,
\begin{equation}
 \E(\psi)\E(\phi)=0.\nonumber
\end{equation}
Because the orthogonal pair $|\psi\rangle,|\phi\rangle$ was arbitrary,
$\E$ is universally pristine on $\mathcal H_{\mathrm{code}}$.

Now write the channel as
\begin{equation}
 \E(X)=\sum_{\alpha}E_{\alpha}XE_{\alpha}^{\dagger}.\nonumber
\end{equation}
For arbitrary pure states $\psi$ and $\phi$,
\begin{align}
 \Tr\!\left[\E(\psi)\E(\phi)\right]
 &=\sum_{\alpha,\beta}
   \Tr\!\left(E_{\alpha}\psi E_{\alpha}^{\dagger}
              E_{\beta}\phi E_{\beta}^{\dagger}\right) \\
 &=\sum_{\alpha,\beta}
   \left|\langle\psi|E_{\alpha}^{\dagger}E_{\beta}|\phi\rangle\right|^2.
 \label{eq:supp-kraus-overlap}
\end{align}
For $|\psi\rangle\perp|\phi\rangle$ the left-hand side vanishes, and hence
\begin{equation}
 \langle\psi|E_{\alpha}^{\dagger}E_{\beta}|\phi\rangle=0
 \label{eq:supp-offdiag-zero}
\end{equation}
for every $\alpha,\beta$.  Define
\begin{equation}
 M_{\alpha\beta}:=P E_{\alpha}^{\dagger}E_{\beta}P.\nonumber
\end{equation}
In any orthonormal basis $\{|a\rangle\}$ of the code subspace,
Eq.~\eqref{eq:supp-offdiag-zero} shows that $M_{\alpha\beta}$ is diagonal.
For $a\neq b$, the orthogonal vectors
\begin{equation}
 |+\rangle=\frac{|a\rangle+|b\rangle}{\sqrt2},
 \qquad
 |-\rangle=\frac{|a\rangle-|b\rangle}{\sqrt2}
 \label{eq:supp-plus-minus}
\end{equation}
satisfy
\begin{equation}
 0=\langle+|M_{\alpha\beta}|-\rangle
 =\frac12\left(
 \langle a|M_{\alpha\beta}|a\rangle
 -\langle b|M_{\alpha\beta}|b\rangle\right).
 \nonumber
\end{equation}
Thus all diagonal entries are equal, so there are numbers $c_{\alpha\beta}$
such that
\begin{equation}
 P E_{\alpha}^{\dagger}E_{\beta}P=c_{\alpha\beta}P.
 \label{eq:supp-KL}
\end{equation}
Finally, $C=(c_{\alpha\beta})$ is positive semidefinite.  Indeed, for any
complex vector $z=(z_{\alpha})$ and any unit vector
$|\psi\rangle\in\mathcal H_{\mathrm{code}}$,
\begin{equation}
 z^{\dagger}Cz
 =\left\langle\psi\left|
 \left(\sum_{\alpha}z_{\alpha}E_{\alpha}\right)^{\dagger}
 \left(\sum_{\beta}z_{\beta}E_{\beta}\right)
 \right|\psi\right\rangle\geq0.
 \nonumber
\end{equation}
This proves condition (iii).

\paragraph*{\boldmath$(\mathrm{iii})\Rightarrow(\mathrm{i})$.}
Assume Eq.~\eqref{eq:supp-KL}.  For orthogonal code states
$|\psi\rangle\perp|\phi\rangle$, Eq.~\eqref{eq:supp-kraus-overlap} gives
\begin{equation}
 \Tr\!\left[\E(\psi)\E(\phi)\right]
 =\sum_{\alpha,\beta}|c_{\alpha\beta}|^2
  |\langle\psi|\phi\rangle|^2=0.
\nonumber
\end{equation}
Equation~\eqref{eq:supp-positive-orthogonality} then implies
$\E(\psi)\E(\phi)=0$, proving universal pristineness.

We have therefore established the equivalence of conditions (i)--(iii).
The equivalence of conditions (iii) and (iv) is precisely the
Knill--Laflamme quantum error-correction theorem cited in the Letter, and is
not repeated here.  This completes the proof of Theorem~3.
\end{proof}

\section{Proof of Theorem 4}
\label{sec:supp-proof-thm4}

For the statement of the Theorem~4, let $\rho_A$ be an arbitrary state with spectral decomposition $\rho_A=\sum_{i=1}^{m}p_i\dyad{i}{i}$. We assume that $m>1$ and $p_i>0$ for all $i$, so that $1<m\leq d_A$. Let $C_p=\sum_{i\neq j}\sqrt{p_ip_j}$, and 
\[
\Delta_{\text{HP}}(\rho_A,s)=\min \left\{1-p_{\text{max}},C_p\sqrt{\frac{d_A(d_{H'}^2-1)}{d_{R'}^2d_{H'}^2-1}}\right\}\, ,
\]
where $p_{\text{max}}=\max\{p_i\}_{i=1}^{m}$. We also let $\epsilon(s)$ be the measure of deviation from universal pristineness given by
\[
\epsilon(s)= \sup_{\substack{
        |\psi\rangle,|\phi\rangle\in\mathcal H_A\\
        \langle\psi|\phi\rangle=0
    }}
    \mathbb E_U
    F_{\mathrm r}
    \left(
        \mathcal E_{U,s}(\psi),
        \mathcal E_{U,s}(\phi)
    \right)\, ,
\]
where $F_{\text{r}}(\rho,\sigma)=\norm{\sqrt{\rho}\sqrt{\sigma}}_1$ is the root fidelity and $\mathbb{E}_U$ denotes the Haar mean. Finally, let $f_m(x)=h_2(x)+x\log(m-1)$, where $h_2$ is the binary entropy.

\begin{theoremfour}
The following statements hold.
\begin{enumerate}
\item
$\epsilon(s)\leq 2^{k/2-s}$.
\item
$\Delta_{\text{HP}}(\rho_A,s)\leq C_p\,2^{k/2-s}$.
\item
$\mathbb{E}_U K(\rho_A,\E_{U,s})\leq f_m\big(\Delta_{\text{HP}}(\rho_A,s)\big)$.
\end{enumerate}
In particular, $\E_{U,s}$ is asymptotically universally pristine on $\mathcal{H}_A$, and the intrinsic information loss $K(\rho_A,\E_{U,s})$ approaches $0$ (in Haar mean) for large $s$.
\end{theoremfour}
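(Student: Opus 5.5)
The plan is to prove the pairwise estimate behind (i) first, read off (ii) from it, and then get (iii) from a pretty-good-measurement error bound combined with Fano's inequality. Set $D=d_Ad_H=d_{R'}d_{H'}$. For orthogonal unit vectors $\ket{\psi},\ket{\phi}\in\H_A$, the global state $(U\otimes\mathds{1}_R)(\psi\otimes\dyad{\Psi}{\Psi}_{HR})(U\otimes\mathds{1}_R)^{\dag}$ on $H'R'R$ is pure. Hence $\E_{U,s}(\psi)$ has the same nonzero spectrum as its marginal on $H'$, so $\operatorname{rank}\E_{U,s}(\psi)\le d_{H'}$.

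For (i), combine $\norm{X}_1\le\sqrt{\operatorname{rank}X}\,\norm{X}_2$ with $\norm{\sqrt\rho\sqrt\sigma}_2^2=\Tr(\rho\sigma)$ to get
\[
F_{\mathrm r}\big(\E_{U,s}(\psi),\E_{U,s}(\phi)\big)\le\sqrt{d_{H'}\,\Tr\big[\E_{U,s}(\psi)\E_{U,s}(\phi)\big]}\,.
\]
Jensen's inequality for the square root then gives $\mathbb E_U F_{\mathrm r}\le\sqrt{d_{H'}\,\mathbb E_U\Tr[\E_{U,s}(\psi)\E_{U,s}(\phi)]}$. The remaining work is a standard second-moment computation.
\begin{enumerate}[1.]
\item Use the swap trick to write $\Tr[\E_{U,s}(\psi)\E_{U,s}(\phi)]$ as the trace of the tensor product of the two global states against $\mathbb F_{R'}\otimes\mathbb F_R$, where $\mathbb F$ denotes a swap.
\item Evaluate $\mathbb E_U U^{\otimes2}XU^{\dag\otimes2}=a\mathds 1+b\mathbb F_{AH}$ by Weingarten calculus on $(AH)^{\otimes2}$, with $X=\psi\otimes\phi\otimes(\text{partial contraction of }\Phi_{HR}^{\otimes2}\text{ with }\mathbb F_R)$.
\item Use $\braket{\psi}{\phi}=0$, which kills the term $\Tr(\psi\phi)$ in the Weingarten coefficients.
\end{enumerate}
The target is to show that the resulting expression satisfies
\[
d_{H'}\,\mathbb E_U\Tr[\E_{U,s}(\psi)\E_{U,s}(\phi)]\le\frac{d_A(d_{H'}^2-1)}{d_{R'}^2d_{H'}^2-1}\,,
\]
with the factor $d_{H'}^2-1$ coming from $\Tr_{H'}$ of $\mathds 1$ versus $\mathbb F$. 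The right-hand side is at most $d_A/d_{R'}^2=2^{k-2s}$, and the bound does not depend on the pair $(\psi,\phi)$, so taking the supremum gives $\epsilon(s)\le 2^{k/2-s}$. I expect this Weingarten bookkeeping, in particular tracking how the maximally entangled $HR$ state contracts against $\mathbb F_R$, to be the main obstacle. I would first try to simplify it by noting that $\dyad{\Psi}{\Psi}_{HR}$ together with the reduced dimension $d_H$ turns the $R$-swap into a factor $d_H^{-1}\mathbb F_H$ acting before $U$.

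Part (ii) is then immediate: $\Delta_{\mathrm{HP}}$ is at most its second argument, and that argument is at most $C_p\,2^{k/2-s}$ by the inequality just obtained.

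For (iii), fix $U$ and take the spectral ensemble $\{(p_i,\dyad ii)\}$ of $\rho_A$ with the pretty good measurement on $RR'$, so that $K(\rho_A,\E_{U,s})\le K$ for this PEM scenario. By Fano's inequality as stated in the main text, $K\le f_m(P_e)$. Two bounds on $P_e$ are available.
\begin{enumerate}[1.]
\item Always guessing the most likely label gives $P_e\le 1-p_{\max}$.
\item The Barnum--Knill bound for the pretty good measurement gives $P_e\le\sum_{i\neq j}\sqrt{p_ip_j}\,F_{\mathrm r}\big(\E_{U,s}(\dyad ii),\E_{U,s}(\dyad jj)\big)$.
\end{enumerate}
Haar-averaging the second bound and applying the per-pair estimate from (i) gives $\mathbb E_UP_e\le\Delta_{\mathrm{HP}}(\rho_A,s)$ when Bob takes whichever of the two strategies is better on average; this choice is made once, independently of $U$.

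Finally, $f_m$ is concave, since $h_2$ is concave and the remaining term is linear, and $f_m$ is increasing on $[0,(m-1)/m]$. Also $\Delta_{\mathrm{HP}}\le 1-p_{\max}\le(m-1)/m$. Jensen's inequality followed by monotonicity therefore gives
\[
\mathbb E_UK(\rho_A,\E_{U,s})\le\mathbb E_Uf_m(P_e)\le f_m(\mathbb E_UP_e)\le f_m\big(\Delta_{\mathrm{HP}}(\rho_A,s)\big)\,.
\]
The asymptotic claims follow because $2^{k/2-s}\to0$ as $s\to\infty$, and $f_m$ is continuous with $f_m(0)=0$.
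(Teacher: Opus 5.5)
Your proposal is correct and follows essentially the paper's route: the rank bound $\operatorname{rank}\E_{U,s}(\psi)\le d_{H'}$ with $\norm{X}_1^2\le\operatorname{rank}(X)\norm{X}_2^2$ and Jensen, then a swap-trick second-moment computation, then Barnum--Knill, the trivial guess, Fano, and concavity and monotonicity of $f_m$. The differences are minor: the paper twirls the observable $S_{R'}\otimes I_{H'}^{\otimes 2}$ rather than the input, which is equivalent, and this gives exactly $\frac{d_{R'}(d_{H'}^2-1)}{d_H(d_{R'}^2d_{H'}^2-1)}$ for orthogonal inputs, so your target holds with equality; in (iii) it uses the optimal minimum-error POVM instead of the PGM, and both error bounds hold pointwise in $U$, so no once-and-for-all choice of strategy is needed. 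One small fix to your heuristic: contracting $\Phi_{HR}^{\otimes 2}$ against $\mathbb F_R$ gives $d_H^{-2}\,\mathbb F_H$ (trace $1/d_H$), not $d_H^{-1}\,\mathbb F_H$.
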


\begin{proof}
Let
\begin{equation}
 |\Phi\rangle_{HR}
 =\frac{1}{\sqrt{d_H}}\sum_{a=1}^{d_H}|a\rangle_H|a\rangle_R,
 \qquad \Phi_{HR}:=|\Phi\rangle\langle\Phi|_{HR},
\nonumber
\end{equation}
and set
\begin{equation}
 d:=d_A d_H=d_{R'}d_{H'}.
 \nonumber
\end{equation}
For a pure input $\psi$, define
\begin{equation}
 \sigma_{\psi}^{(U)}:=\E_{U,s}(\psi),
 \qquad
 \omega_{\psi}:=\psi_A\otimes\Phi_{HR}.
\nonumber
\end{equation}
Let $S_X$ denote the swap operator between two replicas of subsystem $X$.
The swap trick gives
\begin{equation}
 \Tr\!\left(\sigma_{\psi}^{(U)}\sigma_{\phi}^{(U)}\right)
 =\Tr\!\left[
 (\omega_{\psi}\otimes\omega_{\phi})
 \left(
 U^{\dagger\otimes2}
 (S_{R'}\otimes I_{H'}^{\otimes2})
 U^{\otimes2}\otimes S_R
 \right)
 \right].
 \label{eq:supp-swap-trick}
\end{equation}
The exact second Haar moment is
\begin{equation}
 \int dU\,
 U^{\dagger\otimes2}(S_{R'}\otimes I_{H'}^{\otimes2})U^{\otimes2}
 =a I_{AH}^{\otimes2}+bS_{AH},
 \label{eq:supp-Haar-twirl}
\end{equation}
where
\begin{equation}
 a=\frac{d_{R'}(d_{H'}^2-1)}{d_{R'}^2d_{H'}^2-1},
 \qquad
 b=\frac{d_{H'}(d_{R'}^2-1)}{d_{R'}^2d_{H'}^2-1}.
 \label{eq:supp-Haar-coefficients}
\end{equation}
These coefficients follow immediately by taking the trace of
Eq.~\eqref{eq:supp-Haar-twirl}, with and without an additional swap
$S_{AH}$.

The two contractions with the initial state are
\begin{align}
 \Tr\!\left[(\omega_{\psi}\otimes\omega_{\phi})
 (I_{AH}^{\otimes2}\otimes S_R)\right]
 &=\frac{1}{d_H},
 \label{eq:supp-contraction-I}\\
 \Tr\!\left[(\omega_{\psi}\otimes\omega_{\phi})
 (S_{AH}\otimes S_R)\right]
 &=|\langle\psi|\phi\rangle|^2.
 \label{eq:supp-contraction-S}
\end{align}
Combining Eqs.~\eqref{eq:supp-swap-trick}--\eqref{eq:supp-contraction-S}
yields
\begin{equation}
 \HU\Tr\!\left(\sigma_{\psi}^{(U)}\sigma_{\phi}^{(U)}\right)
 =\frac{a}{d_H}+b|\langle\psi|\phi\rangle|^2.
 \label{eq:supp-average-overlap-general}
\end{equation}
For orthogonal inputs,
\begin{equation}
 \HU\Tr\!\left(\sigma_{\psi}^{(U)}\sigma_{\phi}^{(U)}\right)
 =\frac{d_{R'}(d_{H'}^2-1)}
 {d_H\bigl(d_{R'}^2d_{H'}^2-1\bigr)}.
 \label{eq:supp-average-overlap-orthogonal}
\end{equation}

Before tracing out $H'$, the state generated from a pure input is pure;
therefore
\begin{equation}
 \operatorname{rank}\sigma_{\psi}^{(U)}\leq d_{H'}.
 \nonumber
\end{equation}
Using $\|X\|_1^2\leq\operatorname{rank}(X)\|X\|_2^2$ with
$X=\sqrt{\sigma_{\psi}^{(U)}}\sqrt{\sigma_{\phi}^{(U)}}$, we obtain
\begin{equation}
 F_{\text{r}}\!\left(\sigma_{\psi}^{(U)},\sigma_{\phi}^{(U)}\right)^2
 \leq d_{H'}\Tr\!\left(\sigma_{\psi}^{(U)}
                              \sigma_{\phi}^{(U)}\right)\, ,
\nonumber
\end{equation}
where $F_{\text{r}}(\rho,\sigma)=\norm{\sqrt{\rho}\sqrt{\sigma}}_1$ is the root fidelity. Cauchy--Schwarz, Eq.~\eqref{eq:supp-average-overlap-orthogonal}, and
$d_A d_H=d_{R'}d_{H'}$ then yields
\begin{align}
 \HU F_{\text{r}}\!\left(\sigma_{\psi}^{(U)},\sigma_{\phi}^{(U)}\right)
 &\leq
 \sqrt{d_{H'}\,
 \HU\Tr\!\left(\sigma_{\psi}^{(U)}\sigma_{\phi}^{(U)}\right)} \nonumber \\
 &=\sqrt{\frac{d_A(d_{H'}^2-1)}
 {d_{R'}^2d_{H'}^2-1}}
 \leq\frac{\sqrt{d_A}}{d_{R'}}.
 \label{eq:supp-universal-fidelity-bound}
\end{align}
The bound is independent of the chosen orthogonal pair.  Since
$d_A=2^k$ and $d_{R'}=2^s$, taking the supremum proves
\begin{equation}
 \epsilon(s)\leq2^{k/2-s},
 \label{eq:supp-epsilon-bound}
\end{equation}
which is statement (i).

For the spectral decomposition
\begin{equation}
 \rho_A=\sum_{i=1}^{m}p_i|i\rangle\langle i|,
 \nonumber
\end{equation}
define
\begin{equation}
 q_s:=\sqrt{\frac{d_A(d_{H'}^2-1)}
 {d_{R'}^2d_{H'}^2-1}}.
\nonumber
\end{equation}
By definition,
\begin{equation}
 \Delta_{\mathrm{HP}}(\rho_A,s)
 =\min\{1-p_{\max},C_p q_s\}
 \leq C_p q_s
 \leq C_p 2^{k/2-s},
 \label{eq:supp-Delta-bound}
\end{equation}
which proves statement (ii).

It remains to bound the information loss.  For fixed $U$, let
\begin{equation}
 \sigma_i^{(U)}:=\E_{U,s}(|i\rangle\langle i|)
\nonumber
\end{equation}
and let $p_{\mathrm{err}}^{\star}(U)$ be the minimum average error probability
for discriminating the ensemble $\{p_i,\sigma_i^{(U)}\}_{i=1}^{m}$.
The Barnum--Knill bound implies that the pretty-good measurement
satisfies~\cite{BaKn02}
\begin{equation}
 p_{\mathrm{err}}^{\mathrm{PGM}}(U)
 \leq
 \sum_{i\neq j}\sqrt{p_ip_j}\,
 F_{\mathrm r}\!\left(
 \sigma_i^{(U)},\sigma_j^{(U)}
 \right).
\end{equation}
Since the optimal minimum-error probability cannot exceed the error
probability of the pretty-good measurement, the same bound holds for
$p_{\mathrm{err}}^\star(U)$,
\begin{equation}
 p_{\mathrm{err}}^{\star}(U)
 \leq\sum_{i\neq j}\sqrt{p_i p_j}\,
 F_{\text{r}}\!\left(\sigma_i^{(U)},\sigma_j^{(U)}\right).
 \label{eq:supp-discrimination-bound}
\end{equation}
Here the sum is over ordered pairs, in agreement with the definition
$C_p=\sum_{i\neq j}\sqrt{p_i p_j}$ used in the main text. Averaging
Eq.~\eqref{eq:supp-discrimination-bound} and using
Eq.~\eqref{eq:supp-universal-fidelity-bound} gives
\begin{equation}
 \HU p_{\mathrm{err}}^{\star}(U)\leq C_p q_s.\nonumber
\end{equation}
Bob may also ignore the output and always guess the most probable label, so
$p_{\mathrm{err}}^{\star}(U)\leq1-p_{\max}$ for every $U$.  Consequently,
\begin{equation}
 \HU p_{\mathrm{err}}^{\star}(U)
 \leq\Delta_{\mathrm{HP}}(\rho_A,s).
 \label{eq:supp-mean-error-Delta}
\end{equation}

Let $\widehat I$ be the estimate produced by an optimal minimum-error POVM.
The spectral ensemble and this POVM are admissible choices in the
minimization defining the intrinsic information loss; hence
\begin{equation}
 K(\rho_A,\E_{U,s})\leq H(I|\widehat I).
\end{equation}
Fano's inequality gives
\begin{equation}
 H(I|\widehat I)
 \leq h_2\!\left(p_{\mathrm{err}}^{\star}(U)\right)
 +p_{\mathrm{err}}^{\star}(U)\log(m-1)
 =f_m\!\left(p_{\mathrm{err}}^{\star}(U)\right).\nonumber
\end{equation}
The function $f_m$ is concave and is increasing on
$0\leq x\leq1-1/m$.  Moreover,
$p_{\mathrm{err}}^{\star}(U)\leq1-p_{\max}\leq1-1/m$.  Jensen's inequality,
Eq.~\eqref{eq:supp-mean-error-Delta}, and monotonicity therefore imply
\begin{align}
 \HU K(\rho_A,\E_{U,s})
 &\leq\HU f_m\!\left(p_{\mathrm{err}}^{\star}(U)\right) \nonumber \\
 &\leq f_m\!\left(\HU p_{\mathrm{err}}^{\star}(U)\right) \nonumber \\
 &\leq f_m\!\left(\Delta_{\mathrm{HP}}(\rho_A,s)\right).
 \label{eq:supp-final-K-bound}
\end{align}
This is statement (iii) and completes the proof of Theorem~4.
\end{proof}

\end{document}